\documentclass{fairmeta_gr2}

\usepackage{amsmath}
\usepackage{amssymb}
\usepackage{amsfonts}
\usepackage{amsthm}
\usepackage{pifont}
\usepackage{enumitem}
\usepackage[normalem]{ulem}

\microtypesetup{expansion=false}

\newtheorem{proposition}{Proposition}

\crefname{proposition}{Proposition}{Propositions}
\Crefname{proposition}{Proposition}{Propositions}
\crefname{corollary}{Corollary}{Corollaries}
\Crefname{corollary}{Corollary}{Corollaries}

\definecolor{accGreen}{RGB}{180,40,40}
\definecolor{accBlue}{RGB}{40,90,180}
\newcommand{\accBetter}[1]{\,\textcolor{accGreen}{$_{+#1}$}}
\newcommand{\accWorse}[1]{\,\textcolor{accBlue}{$_{-#1}$}}
\newcommand{\spdBetter}[1]{\,\textcolor{accGreen}{$_{#1\times}$}}
\newcommand{\spdWorse}[1]{\,\textcolor{accBlue}{$_{#1\times}$}}

\title{Bifocal Diffusion Language Models: Asymmetric Bidirectional Context for Parallel Generation}

\author[2,*]{Yuhang Chen}
\author[1]{Xianfeng Wu}
\author[2]{Jinhao Duan}
\author[1]{Mingfu Liang}
\author[1]{Xiaohan Wei}
\author[1]{Yunchen Pu}
\author[1]{Fei Tian}
\author[1]{Chonglin Sun}
\author[1]{Parish Aggarwal}
\author[1]{Frank Shyu}
\author[1]{Luke Simon}
\author[1]{Sandeep Pandey}
\author[1,\dagger]{Xi Liu}
\author[2,\dagger]{Tianlong Chen}

\affiliation[1]{Meta AI}
\affiliation[2]{University of North Carolina at Chapel Hill}

\contribution[*]{Work done during an internship at Meta.}
\contribution[\dagger]{Joint corresponding authors.}

\correspondence{Xi Liu (\email{xliu1@meta.com}) and Tianlong Chen (\email{tianlong@cs.unc.edu})}

\abstract{Discrete diffusion language models (dLLMs) recover masked tokens in parallel, offering significant speedups over autoregressive (AR) generation. However, such promising frameworks face a fundamental architectural design dilemma: \ding{182} Adopting bidirectional attention achieves strong generation quality by allowing each position to access the full context, but is inherently incompatible with KV caching, limiting inference throughput in batch-serving scenarios; \ding{183} Conversely, causal attention enables efficient cached inference but loses all right-side context, substantially degrading generation quality. This paper introduces Bifocal dLLMs, a new paradigm that resolves this dilemma through \emph{asymmetric bidirectional context}. Analogous to bifocal lenses, we instantiate the paradigm as \textbf{R2LM} (Right-to-Left Mamba), which combines two complementary mechanisms: $a$) standard causal attention providing precise left-context with full KV cache compatibility, while $b$) a lightweight reverse Mamba SSM sidecar supplying compressed right-side context without breaking cacheability. Comprehensive experiments on continued pretraining of Qwen3-1.7B with 60B tokens demonstrate that R2LM achieves $2.4\times$ to $12.9\times$ higher throughput than bidirectional dLLMs and $1.9\times$ to $2.9\times$ speedup over AR baselines in batch serving through parallel decoding with KV caching, while exceeding the causal baseline on most benchmarks and surpassing the bidirectional dLLM on average.}

\date{\today}

\begin{document}

\maketitle

\section{Introduction}
\label{sec:intro}

Discrete diffusion language models (dLLMs)~\citep{austin2021structured, sahoo2024simple, shi2024simplified, ou2025absorbingdiscretediffusionslm} are a family of generative language models in which text is produced by iteratively denoising masked sequences. They have recently emerged as a promising alternative to the autoregressive (AR) paradigm~\citep{brown2020language, touvron2023llama, yang2024qwen2} dominant in language modeling, matching AR quality at scale~\citep{nie2025llada, ye2025dream, liu2025wedlm}. The appeal is throughput: AR generation requires $N$ sequential forward passes to produce $N$ tokens, regardless of available parallelism. dLLMs recover multiple tokens per step and produce the same $N$ tokens in $T \ll N$ passes, delivering $3$ to $8\times$ wall-clock speedups over optimized AR engines.

However, the dLLM paradigm itself faces an internal architectural dilemma. \ding{182} Most high-quality dLLMs~\citep{nie2025llada, ye2025dream} employ \textbf{bidirectional attention}, allowing every position to aggregate information from all others. It maximizes the context available, but each token's key-value (KV) representation depends on \emph{all} other tokens (including future, yet-to-be-resolved ones), so standard prefix KV caching is impossible and every denoising step must recompute the full attention. \ding{183} An alternative line of work adopts \textbf{causal attention}~\citep{arriola2025block, liu2025wedlm, cheng2025sdar}, preserving the standard left-to-right dependency structure and thereby enabling KV caching. But causal attention loses all right-side context.

\begin{figure*}[t]
\centering
\begin{minipage}[c]{0.68\textwidth}
\centering
\includegraphics[width=\linewidth]{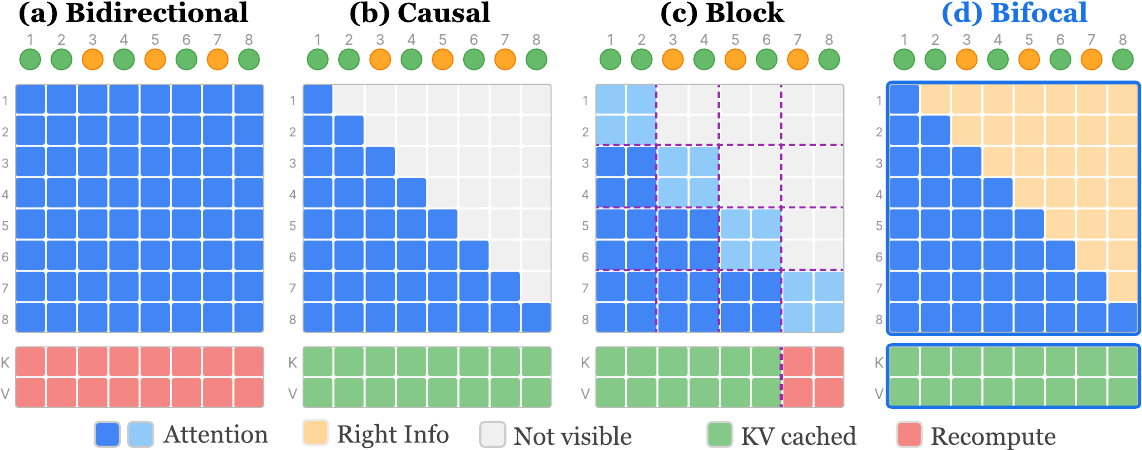}
\end{minipage}\hfill
\begin{minipage}[c]{0.30\textwidth}
\centering
\includegraphics[width=\linewidth]{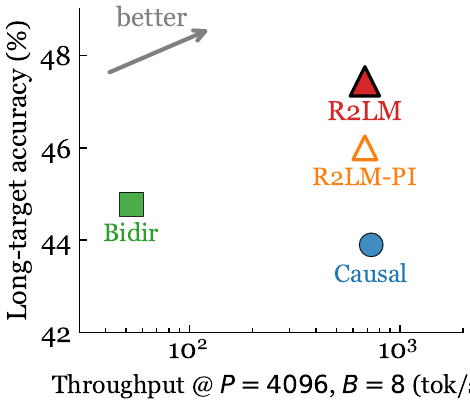}
\end{minipage}
\caption{\textbf{Left}: the asymmetric bidirectional context idea at a glance, contrasting bidirectional dLLMs (full attention, no KV cache), causal dLLMs (KV cache, no right context), and Bifocal (causal attention plus an R2LM sidecar that supplies right-side context while preserving cacheability). \textbf{Right}: quality versus throughput at $P{=}4096$, $B{=}8$ on a single H100. R2LM sits in the upper-right region (high quality, high throughput).}
\label{fig:teaser}
\end{figure*}

This dilemma has driven a range of compromise paradigms summarized in \cref{tab:paradigm}. Block-diffusion variants~\citep{arriola2025block, cheng2025sdar, wu2025fastdllmv2} restrict bidirectional attention to fixed-size blocks, recovering partial KV caching at the cost of bounding right-side context to a single block. Purely causal variants~\citep{liu2025wedlm, ruan2026card} introduce topological reordering or specialized masking schedules to maximize prefix cacheability, but accept the right-context quality gap. Hybrid AR-diffusion variants~\citep{refusion2026, liu2025tidar, sahoo2025esolms} combine autoregressive and diffusion modes through dual-mode architectures or slot-level reordering, at the cost of architectural complexity. Across all cases, the central tension persists: \emph{causal attention restricts each position to left-only context, and no amount of inference optimization can recover the missing right-side information.}
We thus ask:

\begin{tcolorbox}[before skip=0.2cm, after skip=0.2cm, boxsep=0.0cm, middle=0.1cm, top=0.1cm, bottom=0.1cm]
\textit{Can right-side context be supplied through a non-attention pathway, leaving the prefix KV cache intact?}
\end{tcolorbox}

We propose \textbf{Bifocal dLLMs}, a new dLLM paradigm that supplies right-side context through a \emph{non-attention pathway} in parallel with causal attention (\cref{fig:teaser}; rightmost row of \cref{tab:paradigm}). The underlying design principle, \emph{asymmetric bidirectional context}, treats the two directions of context flow through \emph{different mechanisms} rather than the symmetric handling of bidirectional attention---preserving the prefix KV cache that causal attention enables.

We instantiate this paradigm as \textbf{R2LM}: the right-context pathway is a reverse Mamba SSM~\citep{gu2024mamba} sidecar attached to a pretrained autoregressive backbone via forward hooks at every $k$-th decoder layer, with zero-initialized contribution so that the augmented model is bit-identical to the causal Transformer at initialization.

The R2LM layers are position-aware by construction: as a sequential state space model, Mamba processes tokens in order, maintaining the positional structure that distinguishes meaningful right context from mere capacity. The plug-in variant of \cref{sec:ablation}, in which the R2L pathway is trained on top of a frozen MDLM-adapted backbone, isolates this property: with only $5$B of additional tokens, the plug-in recovers $59$\% of the joint model's long-target gain over the causal baseline.

Bifocal is \emph{neither causal nor bidirectional}, but a distinct paradigm. The attention mechanism remains strictly causal, ensuring KV cache validity. The overall information flow is bidirectional: each position's representation incorporates both precise left context (via attention) and compressed right context (via the R2LM sidecar). This asymmetry is the source of both quality and efficiency: where bidirectional attention pays the same per-position cost in both directions, Bifocal delegates the right-context pathway to a lower-cost SSM, while preserving cacheable causal attention as the primary information channel. Our main contributions are:
\begin{itemize}[leftmargin=*]
    \item We introduce \textbf{Bifocal dLLMs}, a new diffusion-language-modeling paradigm built on the principle of \emph{asymmetric bidirectional context}. This is the first design that delivers continuous, position-aware right-side context while preserving standard causal prefix KV caching.
    \item We instantiate the paradigm as \textbf{R2LM}, a reverse state space sidecar attached to a pretrained autoregressive backbone via forward hooks with zero-initialized contribution. The same architecture supports both joint continued pretraining and a frozen-backbone plug-in mode, allowing integration into existing AR checkpoints without disturbing the pretrained model.
    \item We show empirically through controlled three-way continued pretraining of Qwen3-$1.7$B with $60$B tokens that R2LM achieves $2.4$ to $12.9\times$ higher throughput than a bidirectional dLLM while exceeding the causal baseline on most benchmarks and surpassing the bidirectional dLLM on the ALL average.
\end{itemize}

\section{Related work}
\label{sec:related}

\begin{table}[t]
  \caption{Taxonomy of dLLM paradigms. \textbf{Bifocal dLLM} is a paradigm whose attention pattern is \emph{Causal + $\underline{X}$}, where $\underline{X}$ denotes a right-context module attached as a residual to a causal Transformer backbone; the choice of $\underline{X}$ is an open architectural design space. We instantiate $\underline{X}$ as a reverse-direction Mamba SSM (\textbf{R2LM}, \cref{sec:architecture}).}
  \label{tab:paradigm}
  \centering
  \small
  \resizebox{\textwidth}{!}{%
  \begin{tabular}{llccc}
    \toprule
    \textbf{Paradigm} & \textbf{Method} & \textbf{Attention} & \textbf{Right Context} & \textbf{KV Cache} \\
    \midrule
    Bidirectional dLLM & LLaDA~{\scriptsize\citep{nie2025llada}} & Full bidir & Exact (full attention) & \ding{55} \\
    Causal dLLM & WeDLM~{\scriptsize\citep{liu2025wedlm}} & Causal & None & \ding{51} \\
    Block dLLM & BD3LM~{\scriptsize\citep{arriola2025block}} & Block-causal + bidir & Intra-block only & Block-level \\
    Hybrid AR-Diffusion & ReFusion~{\scriptsize\citep{refusion2026}} & Mixed causal + diffusion & Via diffusion component & Partial \\
    \midrule
    \textbf{Bifocal dLLM} & R2LM & \textbf{Causal + $\underline{X}$} & \textbf{Compressed} & \ding{51} \\
    \bottomrule
  \end{tabular}%
  }
\end{table}

\noindent\textbf{Diffusion models.}
Discrete diffusion adapts the continuous-time diffusion framework to categorical data by defining a Markov corruption process over discrete tokens. D3PM~\citep{austin2021structured} introduced transition-matrix-based discrete diffusion, and SEDD~\citep{lou2024sedd} cast discrete denoising as score estimation. The \emph{absorbing-state} variant, in which each corrupted position takes a dedicated \texttt{[MASK]} token, was simplified by Masked Diffusion Language Models (MDLM)~\citep{sahoo2024simple, shi2024simplified} and RADD~\citep{ou2025absorbing, ou2025absorbingdiscretediffusionslm} into a clean cross-entropy objective (\cref{eq:loss}) that is now standard. Discrete flow matching~\citep{gat2024discrete, campbell2024generative} extends the formulation to more general corruption processes. Earlier text-diffusion attempts such as DiffusionBERT~\citep{he2023diffusionbert} and SSDLM~\citep{han2023ssdlm} demonstrated feasibility at smaller scale. We adopt the absorbing-state MDLM objective throughout.

\noindent\textbf{Bidirectional and causal dLLMs.}
Existing dLLMs fall into four paradigms along the attention, context, and caching axes (\cref{tab:paradigm}). \emph{Bidirectional} dLLMs such as LLaDA~\citep{nie2025llada}, Dream~\citep{ye2025dream}, and DiffuLLaMA~\citep{gong2024diffullama} let every position attend to all others, maximising the available context but breaking prefix KV caching since each token's keys and values depend on yet-to-be-resolved positions. \emph{Causal} dLLMs such as WeDLM~\citep{liu2025wedlm} and CARD~\citep{ruan2026card} restore cacheability via topological reordering or masking schedules, but accept the right-context quality gap.

\noindent\textbf{Block and hybrid dLLMs.}
\emph{Block} diffusion models~\citep{arriola2025block, cheng2025sdar, wu2025fastdllmv2} apply bidirectional attention within fixed blocks and causal dependencies across blocks, recovering partial caching at the cost of bounding right context to a single block (typically $32$ to $128$ tokens). \emph{Hybrid AR-diffusion} models~\citep{refusion2026, liu2025tidar, sahoo2025esolms} combine AR and diffusion modes through dual-mode architectures or slot-level permutation, at the cost of architectural complexity. Across these four paradigms, the same trade-off persists: bidirectional attention breaks caching, while causal, block, or hybrid mechanisms restrict right-side information to a strict subset of the sequence. Unlike block diffusion~\citep{arriola2025block}, which restricts right context to a $32$ to $128$ token block, Bifocal supplies continuous, position-aware right context across the full sequence while maintaining standard causal prefix KV caching.

\noindent\textbf{State space models and hybrid architectures.}
Mamba~\citep{gu2024mamba} introduced a selective state space model (SSM) with input-dependent transition matrices, achieving linear-time sequence modeling with quality competitive with Transformers. Mamba-2~\citep{dao2024mamba2} formalized the duality between structured SSMs and attention. Jamba~\citep{lieber2024jamba} combines these by interleaving Mamba and attention layers, both operating left-to-right, within a single language model, using the SSM primarily to reduce attention cost on long contexts. Our R2LM architecture shares the broad idea of mixing attention and SSM, while differing in both direction and role: Jamba uses \emph{same-direction} (L2R + L2R) mixing, where the SSM provides capacity rather than novel information relative to causal attention; we use \emph{opposite-direction} (L2R attention + R2LM) mixing, where the SSM supplies right-context information that causal attention structurally cannot access. To our knowledge this is the first use of reverse SSM as an auxiliary right-context mechanism in diffusion language modeling.

\section{Method}
\label{sec:method}

We instantiate the Bifocal paradigm as \textbf{R2LM}, a discrete diffusion language model that pairs an unmodified causal Transformer backbone with a reverse-direction Mamba residual stream: the backbone delivers prefix-KV-cache-compatible left-context computation, and the reverse stream supplies a residual signal driven by right-context tokens.

\subsection{Preliminaries: Masked Diffusion Language Models}
\label{sec:prelim}

A masked diffusion language model (MDLM)~\citep{sahoo2024simple, nie2025llada} defines a generative model over discrete sequences via a forward corruption process and a learned reverse denoising process. The forward process is parameterised by a continuous time $t \in [0, 1]$ with survival probability $\alpha_t \in [0, 1]$ ($\alpha_0 = 1$, $\alpha_1 = 0$); each clean token $x_i$ in the sequence $\mathbf{x} = (x_1, \ldots, x_L)$ is independently kept with probability $\alpha_t$ and replaced by a special $\texttt{[MASK]}$ token otherwise:
\begin{equation}
q(z_t^i = m \mid x_i) \;=\; \alpha_t \cdot \mathbf{1}[m = x_i] \;+\; (1 - \alpha_t) \cdot \mathbf{1}[m = \texttt{[MASK]}].
\label{eq:forward}
\end{equation}
We use the linear schedule $\alpha_t = 1 - t$. Let $\mathcal{O}_t$ and $\mathcal{M}_t$ denote the observed and masked position sets at time $t$. A denoising network $p_\theta(x_i \mid \mathbf{z}_t)$ is trained by minimising the ELBO-weighted cross-entropy over masked positions~\citep{sahoo2024simple}:
\begin{equation}
\mathcal{L}(\theta) = \mathbb{E}_{t, \mathbf{x}, \mathbf{z}_t}\!\left[\, w(t) \sum_{i \in \mathcal{M}_t} -\log p_\theta(x_i \mid \mathbf{z}_t) \,\right],
\qquad w(t) = \frac{1}{1 - t}.
\label{eq:loss}
\end{equation}
At inference, generation starts fully masked and runs $T$ denoising steps; following~\citet{nie2025llada}, at each step we unmask the positions with highest max-softmax probability. All dLLMs compared in this paper train under \cref{eq:loss}; they differ only in the attention pattern through which $p_\theta$ exposes a masked position $i$ to the observed tokens $\mathbf{x}_{\mathcal{O}_t}$.

\subsection{Attention Patterns in Existing dLLMs}
\label{sec:two_paradigms}

For each masked position $i \in \mathcal{M}_t$ we partition the observed tokens by side: $\mathbf{x}_{\mathcal{O}_t}^{\leq i} = \{x_j : j \in \mathcal{O}_t,\, j \leq i\}$ to the left and $\mathbf{x}_{\mathcal{O}_t}^{>i} = \{x_j : j \in \mathcal{O}_t,\, j > i\}$ to the right of $i$.

\noindent\textbf{Bidirectional dLLM.}
The bidirectional dLLM~\citep{nie2025llada, ye2025dream} parameterises $p_\theta$ as a Transformer with full self-attention, so each position conditions on the entire noised sequence: $p_\theta(x_i \mid \mathbf{z}_t)$. Keys and values at every position depend on tokens at every other position, so prefix-KV caching across denoising steps is invalid. Each step recomputes attention at $O\!\big(B(P{+}G)^2\big)$ and FFN at $O\!\big(B(P{+}G) d^2\big)$ (batch $B$, prompt $P$, generation $G$).

\noindent\textbf{Causal dLLM.}
The strict-causal dLLM applies a causal mask, so position $i$ attends only to $j \leq i$ and the per-position score reduces to $p_\theta(x_i \mid \mathbf{z}_t^{\leq i})$. Prefix KV is computed once at prefill and reused across denoising steps; per-step attention drops to $O\!\big(B G (P{+}G)\big)$ in the $G \ll P$ regime and FFN to $O(B G d^2)$. The cost is that every masked position loses access to tokens at $j > i$. CARD~\citep{ruan2026card} uses specialised masking schedules but shares strict-causal attention.

\noindent\textbf{Block-wise interpolation.}
Block-diffusion paradigms~\citep{arriola2025block, cheng2025sdar, wu2025fastdllmv2, sdlm2025} apply bidirectional attention within fixed-size blocks of length $B_{\text{block}}$ and causal attention across blocks, recovering strict causal at $B_{\text{block}} = 1$ and full bidirectional at $B_{\text{block}} = L$. WeDLM~\citep{liu2025wedlm} additionally reorders observed tokens to the front of each block. Block diffusion caches prefixes at block granularity and limits each masked position's right context to its enclosing block.

All three patterns share one trade-off: right-context access costs prefix-cache validity, while restoring the cache discards right context.

\subsection{The Right-Context Gap and Its Residual Decomposition}
\label{sec:info_gap}

The quality gap between causal and bidirectional dLLMs has a clean information-theoretic form: it equals the conditional mutual information of the target with the right-side observed context. By the chain rule of mutual information, the per-position Bayes-optimal information gap between bidirectional and causal conditioning is
\begin{equation}
\Delta_i^t \;=\; H(x_i \mid \mathbf{x}_{\mathcal{O}_t}^{\leq i}) - H(x_i \mid \mathbf{x}_{\mathcal{O}_t}) \;=\; I\!\left(x_i;\, \mathbf{x}_{\mathcal{O}_t}^{>i} \,\middle|\, \mathbf{x}_{\mathcal{O}_t}^{\leq i}\right) \;\geq\; 0,
\label{eq:info_gap}
\end{equation}
and the expected total gap under the MDLM objective is $\mathbb{E}_t \big[\sum_{i \in \mathcal{M}_t} \Delta_i^t\big]$. At mask rate $\gamma$, the expected right-observed count at position $i$ is $(1 - \gamma)(L - i)$, so $\Delta_i^t$ is largest for early positions and vanishes at the rightmost masked position. The same right-context information also admits an additive log-posterior decomposition:

\begin{proposition}[Additive log-posterior decomposition]
\label{prop:score_decomp}
The Bayes-optimal log-posterior at masked position $i$ decomposes additively into a left-context term and a right-context correction:
\begin{equation}
\log p(x_i \mid \mathbf{x}_{\mathcal{O}_t}) = \underbrace{\log p(x_i \mid \mathbf{x}_{\mathcal{O}_t}^{\leq i})}_{\text{left-context term}} + \underbrace{\log \frac{p(x_i \mid \mathbf{x}_{\mathcal{O}_t})}{p(x_i \mid \mathbf{x}_{\mathcal{O}_t}^{\leq i})}}_{\Delta_i^{\textup{R2L}}:\;\text{right-context correction}}.
\label{eq:score_decomp}
\end{equation}
Taking the expectation of $\Delta_i^{\textup{R2L}}$ under the joint distribution of $(x_i, \mathbf{x}_{\mathcal{O}_t})$ recovers the per-position information gap:
\begin{equation}
\mathbb{E}_{(x_i, \mathbf{x}_{\mathcal{O}_t})}\!\left[\Delta_i^{\textup{R2L}}\right]
\;=\; \mathbb{E}_{\mathbf{x}_{\mathcal{O}_t}}\!\!\left[\,D_{\textup{KL}}\!\left(p(x_i \mid \mathbf{x}_{\mathcal{O}_t}) \;\|\; p(x_i \mid \mathbf{x}_{\mathcal{O}_t}^{\leq i})\right)\right]
\;=\; I\!\left(x_i;\, \mathbf{x}_{\mathcal{O}_t}^{>i} \,\middle|\, \mathbf{x}_{\mathcal{O}_t}^{\leq i}\right).
\label{eq:kl_mi}
\end{equation}
\end{proposition}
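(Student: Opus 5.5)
The first identity \eqref{eq:score_decomp} is a tautology: add and subtract $\log p(x_i \mid \mathbf{x}_{\mathcal{O}_t}^{\leq i})$. The only care needed is well-definedness. I would restrict to the support of the joint law, i.e.\ to configurations with $p(x_i, \mathbf{x}_{\mathcal{O}_t}) > 0$. There $p(x_i \mid \mathbf{x}_{\mathcal{O}_t}^{\leq i}) \geq p(x_i, \mathbf{x}_{\mathcal{O}_t}) / p(\mathbf{x}_{\mathcal{O}_t}^{\leq i}) > 0$, so both logarithms are finite. Configurations outside the support carry zero probability and never enter the expectation in \eqref{eq:kl_mi}. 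One subtlety is that $\mathcal{O}_t$ is itself random. I would therefore condition on the mask pattern $\mathcal{O}_t$ throughout, with the position $i \in \mathcal{M}_t$ fixed. I would also treat $x_i$ as the clean token, since under the forward process \eqref{eq:forward} the observed tokens equal the clean ones, so $\mathbf{x}_{\mathcal{O}_t}$ is a deterministic function of $(\mathbf{x}, \mathcal{O}_t)$.

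For \eqref{eq:kl_mi}, I would apply the tower property in the outer variable $\mathbf{x}_{\mathcal{O}_t}$:
\[
\mathbb{E}_{(x_i,\mathbf{x}_{\mathcal{O}_t})}\bigl[\Delta_i^{\textup{R2L}}\bigr]
= \mathbb{E}_{\mathbf{x}_{\mathcal{O}_t}}\Bigl[\textstyle\sum_{x} p(x \mid \mathbf{x}_{\mathcal{O}_t}) \log \frac{p(x \mid \mathbf{x}_{\mathcal{O}_t})}{p(x \mid \mathbf{x}_{\mathcal{O}_t}^{\leq i})}\Bigr].
\]
The inner sum is, by definition, $D_{\textup{KL}}\bigl(p(\cdot \mid \mathbf{x}_{\mathcal{O}_t}) \,\|\, p(\cdot \mid \mathbf{x}_{\mathcal{O}_t}^{\leq i})\bigr)$. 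This gives the first equality. Absolute continuity holds by the support argument above, so each KL term is finite whenever the vocabulary is finite, which it is here.

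For the second equality, I would rewrite the log-ratio using Bayes' rule, splitting $\mathbf{x}_{\mathcal{O}_t} = (\mathbf{x}_{\mathcal{O}_t}^{\leq i}, \mathbf{x}_{\mathcal{O}_t}^{>i})$ and abbreviating $L = \mathbf{x}_{\mathcal{O}_t}^{\leq i}$, $R = \mathbf{x}_{\mathcal{O}_t}^{>i}$:
\[
\frac{p(x_i \mid L, R)}{p(x_i \mid L)} = \frac{p(x_i, R \mid L)}{p(x_i \mid L)\, p(R \mid L)}.
\]
Taking the expectation of the log of this ratio over the joint law of $(x_i, L, R)$ is exactly the definition of the conditional mutual information $I(x_i; R \mid L)$. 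Equivalently, the expectation equals $H(x_i \mid L) - H(x_i \mid L, R)$, which matches the chain-rule expression in \eqref{eq:info_gap}.

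I expect no real obstacle. The only delicate point is bookkeeping: the randomness of $\mathcal{O}_t$ and $t$ must either be conditioned on or averaged consistently on both sides. Conditioning on $(t, \mathcal{O}_t)$ handles this, because the identity then holds pointwise and survives any further averaging. I would also note that the expectation is taken under the true joint law rather than $p_\theta$; the statement concerns Bayes-optimal posteriors, so this is the intended reading.
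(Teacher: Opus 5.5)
Your proposal is correct and follows essentially the same route as the paper's sketch. Like the paper, you treat the first identity as the tautological $\log a = \log b + \log(a/b)$ split, then use the inner expectation given $\mathbf{x}_{\mathcal{O}_t}$ to get the KL term, then identify its outer average with the conditional mutual information. Your support argument and your conditioning on $(t,\mathcal{O}_t)$ spell out well-definedness details that the paper leaves implicit.
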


\begin{proof}[Proof sketch]
\cref{eq:score_decomp} is the identity $\log a = \log b + \log(a/b)$. For \cref{eq:kl_mi}, fix $\mathbf{x}_{\mathcal{O}_t}$, take the inner expectation under $p(x_i \mid \mathbf{x}_{\mathcal{O}_t})$ to recover $D_{\text{KL}}\!\big(p(\cdot \mid \mathbf{x}_{\mathcal{O}_t}) \,\|\, p(\cdot \mid \mathbf{x}_{\mathcal{O}_t}^{\leq i})\big)$, and outer-expect over $\mathbf{x}_{\mathcal{O}_t}$ to convert the conditional KL into a conditional mutual information.\end{proof}

\Cref{eq:score_decomp} motivates three architectural requirements: \ding{182}~\uline{preserve the causal pathway} so its forward pass delivers the left-context term and prefix-KV-cache reuse remains valid; \ding{183}~\uline{add a residual signal driven by tokens at $j > i$} to approximate $\Delta_i^{\textup{R2L}}$; \ding{184}~\uline{start from a zero residual} so the augmented model is bit-identical to the causal baseline at initialisation. LayerNorm and softmax break strict additivity at the logit level, so we treat \cref{eq:score_decomp} as an inductive bias rather than a strict equality.

\subsection{R2LM Architecture}
\label{sec:architecture}

\begin{figure*}[t]
\centering
\includegraphics[width=\textwidth]{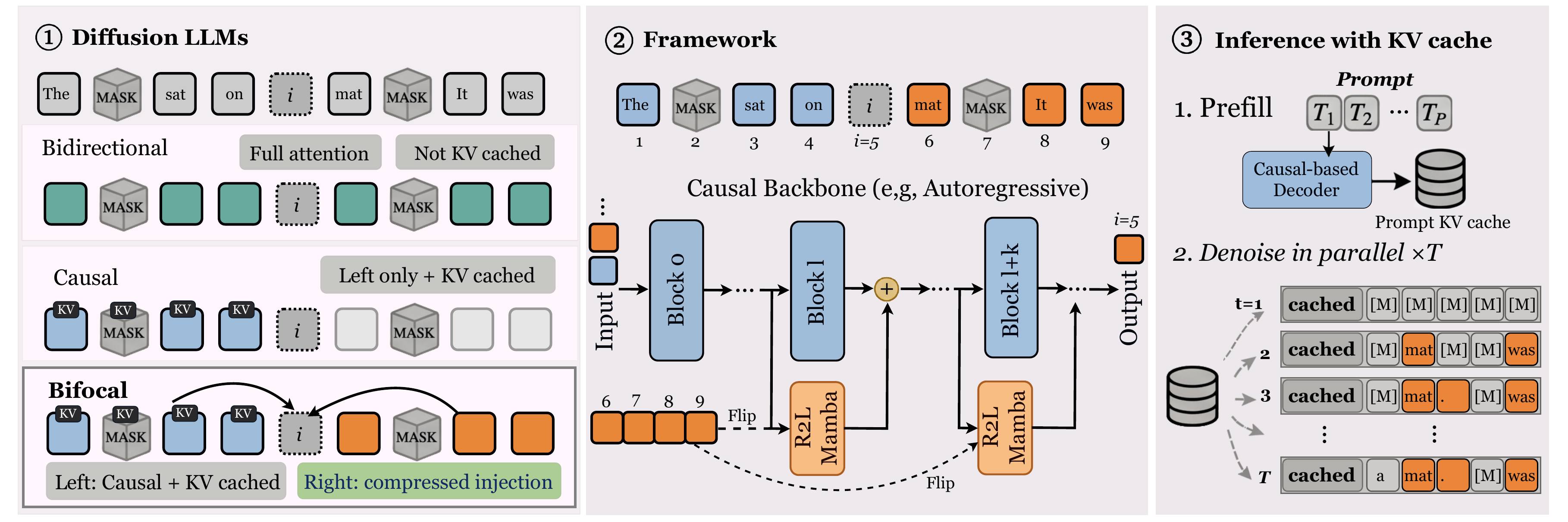}
\caption{R2LM architecture. A causal Transformer backbone (left-context, KV-cache compatible) is augmented with reverse Mamba SSM hooks attached after every $k$-th decoder layer. The R2L stream supplies the right-context correction $\Delta_i^{\textup{R2L}}$ via a $\tanh(s)$-gated residual with $s$ initialised to zero.}
\label{fig:architecture}
\end{figure*}

\Cref{eq:score_decomp} admits an architectural realization, R2LM (rightmost row of \cref{tab:paradigm}): an unmodified causal Transformer delivers the left-context term exactly, and a residual reverse-direction Mamba stream attached at a small number of decoder layers via forward hooks supplies the right-context correction $\Delta_i^{\textup{R2L}}$ (\cref{fig:architecture}).

\noindent\textbf{R2LM stream.}
Only the R2L direction supplies information that causal attention cannot access. Let $\mathbf{h} \in \mathbb{R}^{B \times L \times d}$ denote the hidden states at decoder layer $\ell$, where $B$ is the batch size, $L$ the sequence length, and $d$ the model dimension. The R2L stream produces an updated hidden state $\mathbf{h}^{+}$ via four steps: \ding{182}~\uline{sequence flip}: $\mathbf{h}_{\text{flip}} = \mathrm{flip}(\mathbf{h}, \mathrm{dim}{=}1)$, where $\mathrm{flip}(\cdot, \mathrm{dim}{=}1)$ reverses the sequence axis. The flip exposes right-side tokens to a left-to-right scan; without it, the SSM would summarise left context, which the causal backbone already provides. \ding{183}~\uline{selective scan}: $\mathbf{h}_{\text{mamba}} = \mathrm{Mamba}(\mathbf{h}_{\text{flip}})$, where $\mathrm{Mamba}(\cdot)$ is the Mamba-1 selective-scan block of~\citet{gu2024mamba}. We choose Mamba over alternatives (reverse self-attention, reverse RNN, linear attention) because its hidden state aggregates a position-aware summary of preceding tokens at linear time, and remains compatible with KV caching since the SSM scan does not consume causal-attention keys. \ding{184}~\uline{unflip and normalise}: $\mathbf{h}_{\text{R2L}} = \mathrm{LN}(\mathrm{flip}(\mathbf{h}_{\text{mamba}}, \mathrm{dim}{=}1))$, where $\mathrm{LN}$ is a layer normalisation local to this stream. The unflip restores the original sequence order so that $\mathbf{h}_{\text{R2L},i}$ summarises tokens at $j \geq i$, aligning the residual with backbone position $i$. The LayerNorm is necessary to bound the magnitude of the residual before it is added back to the backbone stream. \ding{185}~\uline{gated residual}: $\mathbf{h}^{+} = \mathbf{h} + \mathbf{h}_{\text{R2L}} \cdot \tanh(s)$, where $s \in \mathbb{R}$ is one learnable scalar gate per hooked layer, initialised at zero. The scalar (rather than vector) gate gives a single trackable per-layer quantity, and the $\tanh$ saturation bounds the residual contribution at $|\tanh(s)| \leq 1$, matching LayerScale and ControlNet. These four steps together realise requirement \ding{183} of the architectural list above, with the zero-initialised gate realising \ding{184}.

\subsection{Training and Inference}
\label{sec:plugin_train}

\noindent\textbf{Training modes.}
Both training modes minimise \cref{eq:loss} with a causal attention mask, sampling $t \sim \mathcal{U}(\epsilon, 1 - \epsilon)$ under the linear schedule and uniform random masking. The schedule is held fixed across variants so that an architectural comparison is not entangled with a schedule confound. In the \emph{joint} mode, both the backbone parameters and the R2L stream parameters are trained from a pretrained autoregressive checkpoint, with R2L weights initialised from scratch and scalar gates $\{s_\ell\}$ set to $0$; the augmented model is bit-identical to the causal dLLM at step $0$ and gradually opens the R2L pathway as the gates grow. In the \emph{plug-in} mode (\textbf{R2LM-PI}), the backbone is frozen at a separately trained causal dLLM checkpoint and only the R2L stream parameters (Mamba blocks, LayerNorms, and scalar gates) are updated, again with $s = 0$ at init. Plug-in mode serves both as a controlled probe that isolates the R2L stream from joint backbone training, and as an adapter recipe by which any pretrained causal dLLM hosts the R2L stream.

\noindent\textbf{Inference with prefix KV cache.}
At inference we follow standard cached parallel decoding (\cref{fig:architecture}, right panel): \ding{182}~\uline{prefill} forwards the prompt of length $P$ through the causal backbone and the R2L hooks once, caching prompt keys, values, and post-hook hidden states; \ding{183}~\uline{denoising} runs $T$ steps that process only the $G$ generation positions, reading prompt KV from cache and running the R2L scan on the generation block alone; \ding{184}~\uline{unmask}: at each step, positions with highest max-softmax probability are unmasked, following~\citet{nie2025llada}. The cache stays valid across denoising steps because cached hidden states already incorporate the post-hook residual stream from prefill. The per-step R2L cost is $O(BG\,d_{\text{ssm}}\,H)$, independent of $P$, while cached attention contributes $O(BG(P{+}G))$, recovering the throughput profile of cached causal serving while supplying the right-context residual that bidirectional dLLMs purchase by recomputing full attention each step.

\section{Experiments}
\label{sec:experiments}

We evaluate R2LM through a controlled three-way comparison against the bidirectional and causal dLLM endpoints under a matched $60$B-token continued-pretraining protocol on Qwen3-$1.7$B. We aim to answer the following research questions:
\begin{itemize}[leftmargin=*]
  \item \textbf{RQ1:} Does R2LM close the quality gap to bidirectional dLLMs at causal-level cost?
  \item \textbf{RQ2:} What inference-efficiency gain does KV-cache compatibility deliver?
  \item \textbf{RQ3:} Is the right-context pathway necessary, or do gains come from added parameters alone?
\end{itemize}

\subsection{Experimental Setup}
\label{sec:setup}

\noindent\textbf{Models.} All variants share the \texttt{Qwen3-1.7B}~\citep{yang2024qwen2} backbone ($28$ decoder layers, hidden dimension $d{=}2048$) and undergo continued pretraining under the MDLM objective of \cref{eq:loss}. The \emph{Causal dLLM} retains native causal attention. The \emph{Bidirectional dLLM} replaces the causal mask with a padding-only 4D mask. \emph{R2LM} (our Bifocal instantiation) adds a \emph{ReverseMambaLayer} hook after every $k{=}4$\textsuperscript{th} layer ($H{=}7$ hooks) with $d_\text{state}{=}16$, $d_\text{conv}{=}4$, expand $2$, and zero-init gates.

\noindent\textbf{Training.} All three runs share the same optimizer and data: AdamW at $1\!\times\!10^{-4}$, linear warmup over $500$ steps then cosine decay, weight decay $0.01$, per-device batch size $8$, no gradient accumulation, sequence length $4096$, \texttt{bfloat16} with gradient checkpointing, and full-shard FSDP across $32$ H100 GPUs ($4{\times}8$). The corpus is \texttt{HuggingFaceFW/fineweb-edu}~\citep{lozhkov2024fineweb} streamed with seed $42$, so all three runs see identical batches. Each model is trained for $60$B tokens.

\noindent\textbf{Plug-in.} \emph{R2LM-PI} starts from the causal dLLM checkpoint above with the backbone frozen, training only the seven \emph{ReverseMambaLayer} modules for $5$B further tokens at learning rate $5\!\times\!10^{-4}$ and warmup $100$ steps. This isolates the R2L pathway from joint backbone training and serves as an adapter recipe for any pretrained causal dLLM.

\noindent\textbf{Benchmarks.} We report on seven multiple-choice tasks following~\citep{sahoo2024simple, nie2025llada, ye2025dream}: ARC-Challenge ($25$-shot)~\citep{clark2018arc}, HellaSwag ($3$-shot)~\citep{zellers2019hellaswag}, PIQA ($0$-shot)~\citep{bisk2020piqa}, WinoGrande ($5$-shot)~\citep{sakaguchi2020winogrande}, OpenBookQA ($0$-shot)~\citep{mihaylov2018openbookqa}, BoolQ ($0$-shot)~\citep{clark2019boolq}, and MMLU ($5$-shot, $57$ subtasks)~\citep{hendrycks2021mmlu}. Throughput is swept on a single H100-$80$GB across $B \in \{1, 8\}$ and $P \in \{512, 1024, 2048, 4096\}$ with $G{=}128$ generated tokens and $T{=}32$ denoising steps.

\noindent\textbf{Evaluation.} Multiple-choice scoring uses \texttt{lm-evaluation-harness}~\citep{eval-harness} with $\text{mc\_num}{=}32$, \texttt{batch\_size}$=8$, and \texttt{max\_length}$=4096$. Causal and R2LM variants use the \texttt{CausalMDLMSampler} with KV-cache reuse; the bidirectional variant uses the \texttt{MDLMSampler}.

\subsection{Main Results: Multiple-Choice Benchmarks}
\label{sec:main_results}

\Cref{tab:main} reports multiple-choice accuracy across the seven benchmarks at $60$B tokens. We organize the columns into two blocks based on a single predictive property of each task: the length in tokens of the candidate continuation that the model is asked to score. Reasoning and commonsense tasks (ARC-C, HellaSwag, PIQA, WinoGrande, OpenBookQA) score multi-token candidates with median target length between $4$ and $20$~tokens. BoolQ (yes/no) and MMLU ($5$-shot, single letter answer) reduce to a single answer token.

\noindent\textbf{Long-target tasks.} On the five multi-token tasks the average accuracy of R2LM is $47.44$\%, compared with $43.90$\% for the causal dLLM ($+3.54$~pp) and $44.78$\% for the bidirectional dLLM ($+2.66$~pp). R2LM exceeds the causal baseline on all five (ARC-C $+0.7$, HellaSwag $+7.3$, PIQA $+3.4$, WinoGrande $+5.9$, OpenBookQA $+0.4$~pp). Versus the bidirectional dLLM the picture is mixed: R2LM beats bidirectional on PIQA ($+5.7$), WinoGrande ($+2.5$), and OpenBookQA ($+11.4$~pp), is roughly tied on HellaSwag ($-0.4$~pp), and trails bidirectional on ARC-C ($-5.9$~pp). On three of seven tasks R2LM therefore beats both endpoints; on a fourth it ties bidirectional while still beating causal; and the long-target average improves over both baselines. The result is consistent with \cref{eq:info_gap}: the value of right context grows with the amount of unmasked material to the right of each masked position, and the gain is largest where the candidate occupies many tokens.

\noindent\textbf{Single-token tasks.} R2LM continues to improve over the causal baseline on the two single-token tasks: $67.00$\% on BoolQ (causal $65.30$\%, $+1.70$~pp) and $29.80$\% on MMLU (causal $27.40$\%, $+2.40$~pp). The bidirectional dLLM is the strongest of the three on this regime ($68.40$\% / $34.90$\%), but R2LM closes most of the gap to bidir (BoolQ $-1.4$, MMLU $-5.1$~pp). The plug-in variant R2LM-PI extends the gain over causal further ($+2.0$~pp BoolQ, $+9.6$~pp MMLU) and surpasses bidir on MMLU by $+2.1$~pp.

\noindent\textbf{Headline averages.} Across all seven tasks R2LM averages $47.71$\%, exceeding both the causal baseline at $44.60$\% ($+3.11$~pp) and the bidirectional dLLM at $46.74$\% ($+0.97$~pp). R2LM-PI averages $\mathbf{47.76}$\% on ALL, marginally above joint R2LM and the strongest of the four models. R2LM-PI also leads the Short avg ($52.15$\% vs $51.65$\% for bidir) and MMLU ($37.00$\% vs $34.90$\% for bidir), achieving the strongest single-token performance overall. R2LM thus exceeds the causal baseline on most benchmarks and the bidirectional dLLM on the ALL average; R2LM-PI exceeds both endpoints on Long, Short, and ALL averages.

\noindent\textbf{R2LM-PI matches or exceeds joint R2LM across averages.} Despite training on a frozen causal backbone for only $5$B tokens (one-twelfth the data of joint training) and updating only the $185$M R2L parameters (one-tenth of the joint trainable count), R2LM-PI recovers $59$\% of the joint long-target gain and posts the highest ALL avg ($47.76$\%) and the highest Short avg ($52.15$\%) of the four models. The pattern indicates that the right-context pathway transfers cleanly to a pretrained MDLM backbone without joint training; we treat this as evidence of architectural decoupling between the causal backbone and the right-context residual (\cref{sec:ablation}).

\begin{table}[t]
\caption{Multiple-choice accuracy on seven benchmarks at $60$B-token CPT from Qwen3-$1.7$B. \texttt{Long}/\texttt{Short}/\texttt{ALL} columns average over $5$ multi-token / $2$ single-token / all $7$ tasks. Bold = best within column. Subscripts give $\Delta$ accuracy vs the \emph{Causal} baseline (red $\uparrow$ better, blue $\downarrow$ worse). $\uparrow$ higher is better.}
\label{tab:main}
\centering
\small
\setlength{\tabcolsep}{4pt}
\resizebox{\textwidth}{!}{%
\begin{NiceTabular}{@{}lccccc|cc|ccc@{}}[code-before = \rectanglecolor{metabg}{1-1}{3-11} \rectanglecolor{metabg}{5-1}{5-11} \rectanglecolor{metabg}{8-1}{8-11}]
\toprule
 & \multicolumn{5}{c|}{\textbf{Long-target tasks (multi-token candidate)}} & \multicolumn{2}{c|}{\textbf{Short-target}} & \multicolumn{3}{c}{\textbf{Average}} \\
\textbf{Model} & ARC-C & HellaSw & PIQA & WG & OBQA & BoolQ & MMLU & Long & Short & ALL \\
 & \scriptsize $25$-sh & \scriptsize $3$-sh & \scriptsize $0$-sh & \scriptsize $5$-sh & \scriptsize $0$-sh & \scriptsize $0$-sh & \scriptsize $5$-sh & & & \\
\midrule
Bidirectional dLLM       & $\bm{45.30}$\accBetter{6.60} & $\bm{43.40}$\accBetter{7.70} & $61.50$\accWorse{2.30} & $\underline{57.70}$\accBetter{3.40} & $16.00$\accWorse{11.00} & $\bm{68.40}$\accBetter{3.10} & $\underline{34.90}$\accBetter{7.50} & $44.78$\accBetter{0.88} & $\underline{51.65}$\accBetter{5.30} & $46.74$\accBetter{2.14} \\
Causal dLLM              & $38.70$ & $35.70$ & $63.80$ & $54.30$ & $\underline{27.00}$ & $65.30$ & $27.40$ & $43.90$ & $46.35$ & $44.60$ \\
\midrule
\multicolumn{11}{@{}l}{\emph{\textbf{Bifocal dLLMs} (this work)}} \\
\quad\textbf{R2LM}    & $39.40$\accBetter{0.70} & $\underline{43.00}$\accBetter{7.30} & $\bm{67.20}$\accBetter{3.40} & $\bm{60.20}$\accBetter{5.90} & $\bm{27.40}$\accBetter{0.40} & $67.00$\accBetter{1.70} & $29.80$\accBetter{2.40} & $\bm{47.44}$\accBetter{3.54} & $48.40$\accBetter{2.05} & $\underline{47.71}$\accBetter{3.11} \\
\quad\textbf{R2LM-PI (plug-in)}  & $\underline{40.90}$\accBetter{2.20} & $39.50$\accBetter{3.80} & $\underline{64.60}$\accBetter{0.80} & $59.40$\accBetter{5.10} & $25.60$\accWorse{1.40} & $\underline{67.30}$\accBetter{2.00} & $\bm{37.00}$\accBetter{9.60} & $\underline{46.00}$\accBetter{2.10} & $\bm{52.15}$\accBetter{5.80} & $\bm{47.76}$\accBetter{3.16} \\
\bottomrule
\end{NiceTabular}%
}
\end{table}

\subsection{Inference Efficiency}
\label{sec:speed}

\noindent\textbf{Throughput.} \cref{fig:efficiency} and \cref{tab:speed} report serving~\citep{liang2026generative} on a single H100-$80$GB at $32$ denoising steps and $128$ generated tokens. The bidirectional dLLM's throughput collapses from $483$ to $53$~tok/s as $P$ grows from $512$ to $4096$ at $B{=}8$, since every denoising step recomputes attention over the full $(P{+}G)$ sequence. R2LM's KV-cached path stays nearly flat ($1154\!\to\!683$~tok/s), widening the gap from $2.4\times$ at $P{=}512$ to $\mathbf{12.9\times}$ at $P{=}4096$. R2L overhead versus the causal dLLM is $7$ to $16$\%, shrinking with $P$ because the SSM cost is bounded by the generation block.

\noindent\textbf{Memory.} Peak GPU memory at $B{=}1$ grows near-linearly for the bidirectional dLLM ($4.5\!\to\!18.5$~GB at $P\in\{1024, 16384\}$) and sublinearly for R2LM ($4.5\!\to\!12.7$~GB). At $P{=}16384$ R2LM saves $5.9$~GB ($32$\%) over the bidirectional dLLM while staying within $0.4$~GB of causal. These results show the R2L SSM preserves the causal memory profile while supplying right-side context.

\begin{figure*}[t]
\centering
\includegraphics[width=\textwidth]{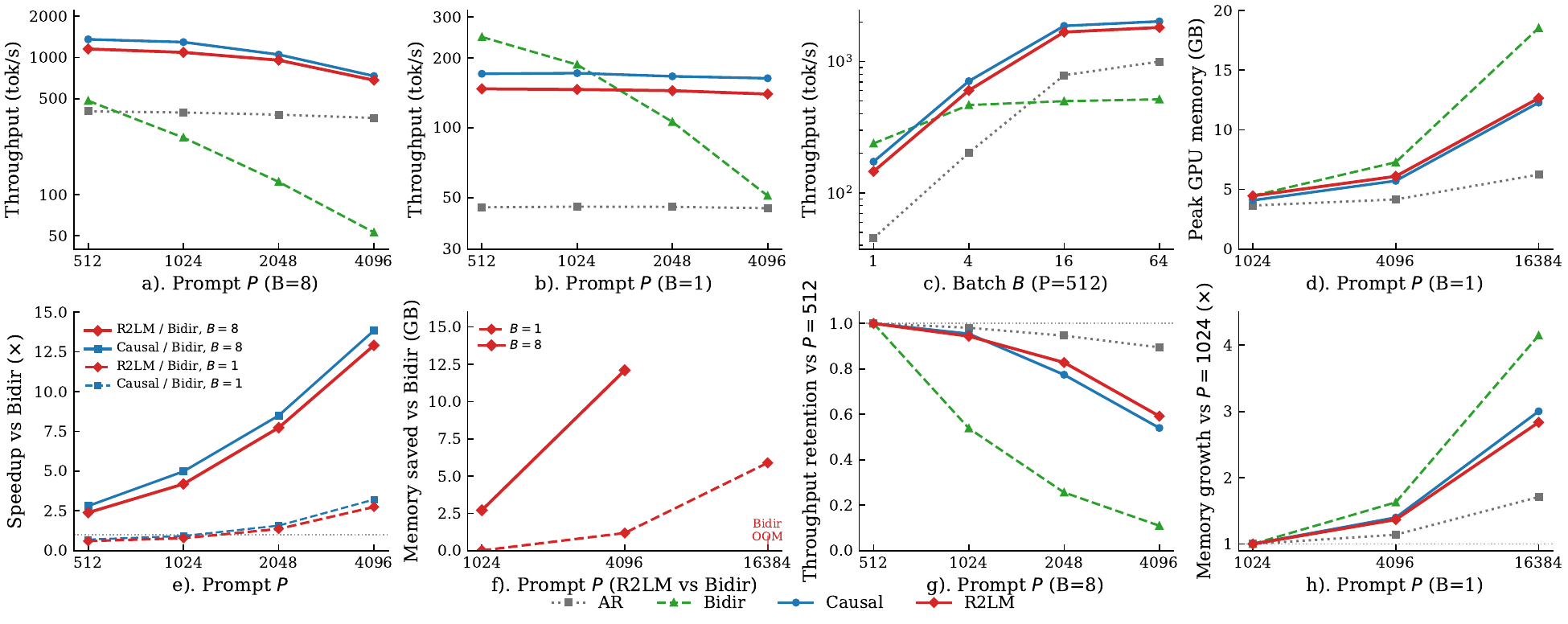}
\caption{Inference efficiency on H100-$80$GB ($32$ denoising steps, $128$ generated tokens). \textbf{(a, b)}~Throughput vs prompt length $P$ at $B{=}8$ and $B{=}1$. \textbf{(c)}~Throughput vs batch $B$ at $P{=}512$; AR clipped at $10^3$ (true $2939$ tok/s at $B{=}64$). \textbf{(d)}~Speedup over Bidir for the KV-cached variants. \textbf{(e, f)}~Peak memory and memory saved vs Bidir at $B{=}1$. \textbf{(g, h)}~Throughput retention vs $P{=}512$ at $B{=}8$; memory growth vs $P{=}1024$ at $B{=}1$.}
\label{fig:efficiency}
\end{figure*}

\begin{table*}[t]
\caption{Throughput (tok/s) on a single H100-$80$GB at $32$ denoising steps and $128$ generated tokens. Subscripts give the speedup over \emph{Bidir} at the same $(B,P)$ (red $\uparrow$ faster, blue $\downarrow$ slower). Bold = best within column, underlined = second-best. $\uparrow$ higher is better.}
\label{tab:speed}
\centering
\small
\setlength{\tabcolsep}{3pt}
\resizebox{\textwidth}{!}{%
\begin{NiceTabular}{@{}l|cccc|cccc@{}}[code-before = \rectanglecolor{metabg}{1-1}{2-9} \rectanglecolor{metabg}{4-1}{4-9} \rectanglecolor{metabg}{7-1}{7-9}]
\toprule
\multirow{2}{*}{\textbf{Model}} & \multicolumn{4}{c|}{\textbf{$B{=}1$ (single-request)}} & \multicolumn{4}{c}{\textbf{$B{=}8$ (saturated, deployment regime)}} \\
\cmidrule(lr){2-5} \cmidrule(lr){6-9}
& $P{=}512$ & $P{=}1024$ & $P{=}2048$ & $P{=}4096$ & $P{=}512$ & $P{=}1024$ & $P{=}2048$ & $P{=}4096$ \\
\midrule
Qwen3-1.7B (AR)        & $45.4$\spdWorse{0.18} & $45.7$\spdWorse{0.24} & $45.6$\spdWorse{0.43} & $45.0$\spdWorse{0.88} & $404$\spdWorse{0.84} & $397$\spdBetter{1.53} & $383$\spdBetter{3.09} & $362$\spdBetter{6.83} \\
Causal dLLM            & $\underline{170.9}$\spdWorse{0.69} & $\underline{171.9}$\spdWorse{0.92} & $\bm{166.6}$\spdBetter{1.57} & $\bm{163.4}$\spdBetter{3.21} & $\bm{1356}$\spdBetter{2.81} & $\bm{1295}$\spdBetter{4.98} & $\bm{1050}$\spdBetter{8.47} & $\bm{732}$\spdBetter{13.8} \\
Bidir dLLM             & $\bm{246.2}$ & $\bm{187.0}$ & $106.0$ & $50.9$ & $483$ & $260$ & $124$ & $53$ \\
\midrule
\multicolumn{9}{@{}l}{\emph{\textbf{Bifocal dLLMs} (this work)}} \\
\quad\textbf{R2LM}  & $147.1$\spdWorse{0.60} & $146.2$\spdWorse{0.78} & $\underline{144.5}$\spdBetter{1.36} & $\underline{139.7}$\spdBetter{2.74} & $\underline{1154}$\spdBetter{2.39} & $\underline{1089}$\spdBetter{4.19} & $\underline{956}$\spdBetter{7.71} & $\underline{683}$\spdBetter{12.9} \\
\bottomrule
\end{NiceTabular}%
}
\end{table*}

\subsection{Ablation Studies}
\label{sec:ablation}

To answer RQ3 we isolate the R2L mechanism with two parameter-matched controls trained jointly with the causal backbone on $1$B tokens (\cref{tab:ablation}): an \emph{MLP Adapter} that lacks any directional structure, and an \emph{L2R Mamba} that runs in the wrong direction. We report WikiText-103 perplexity (via MC log-likelihood, following~\citet{sahoo2024simple}) because at the $1$B-token CPT scale, downstream MC accuracy is too undertrained to discriminate between variants.

\noindent\textbf{Direction matters; parameters alone do not.} The MLP Adapter matches R2LM's parameter count ($185$M) but provides no directional structure: it improves training loss only marginally ($4.66\!\to\!4.63$) and \emph{worsens} WikiText perplexity by $2\times$ ($178\!\to\!356$), ruling out the ``more parameters'' hypothesis. The L2R Mamba uses the same Mamba block but scans left-to-right (redundant with causal attention) and ends at a worse training loss ($4.84$) and the worst perplexity in the table ($374$): the wrong direction is actively harmful, not merely useless. Only R2LM's reversed scan recovers the bidirectional upper bound, closing $90.2$\% of the causal-to-bidirectional training-loss gap and reducing WikiText perplexity by $4.2\times$.

\begin{table}[t]
\caption{Mechanism ablation under matched $1$B-token joint CPT on Qwen3-$1.7$B. WikiText-103 perplexity is via MC log-likelihood ($\text{mc\_num}{=}32$) on a $100$-chunk subset; train loss is the running mean over the last $100$ steps. $\downarrow$ lower is better.}
\label{tab:ablation}
\centering
\small
\setlength{\tabcolsep}{6pt}
\begin{NiceTabular}{@{}lccc@{}}[code-before = \rectanglecolor{metabg}{1-1}{1-4} \rectanglecolor{metabg}{5-1}{5-4}]
\toprule
\textbf{Variant} & \textbf{Extra params} & \textbf{Train loss} $\downarrow$ & \textbf{WikiText-103 PPL} $\downarrow$ \\
\midrule
Causal dLLM (baseline)                    & $0$    & $4.66$          & $178$ \\
\quad + MLP Adapter (no direction)        & $185$M & $4.63$          & $356$ \\
\quad + L2R Mamba (original direction)       & $185$M & $4.84$          & $374$ \\
\quad + R2L Mamba (\textbf{R2LM, ours})   & $185$M & $\bm{3.18}$ & $\bm{42}$ \\
Bidirectional dLLM          & $0$    & $3.02$          & $30$ \\
\bottomrule
\end{NiceTabular}
\end{table}

\section{Conclusion}
\label{sec:conclusion}

We presented Bifocal dLLMs, a paradigm for discrete diffusion language models that pairs causal-level inference efficiency with right-context access through asymmetric bidirectional context. The R2LM instantiation combines a causal Transformer backbone with a reverse Mamba pathway injected via forward hooks, delivering KV-cache-compatible inference and exceeding the causal baseline on most benchmarks. The plug-in variant R2LM-PI, trained on a frozen causal backbone with one-twelfth the data and one-tenth the trainable parameters of joint training, posts the highest ALL average among the four models, indicating that the right-context pathway transfers cleanly to pretrained MDLM checkpoints.

\noindent\textbf{Limitations.} The model and training budget ($1.7$B parameters, $60$B tokens) are below the regime in which dLLMs and AR baselines establish their strongest published numbers; results may evolve at scale. The R2L stream adds $10.8$\% backbone parameters, between LoRA-style adapters and full bidirectional attention; lower-rank or shared-projection variants are open. Single-request inference at $B{=}1$ is slightly slower than a pure causal dLLM by the R2L overhead.

\bibliographystyle{assets/plainnat}
\bibliography{references}

\end{document}